\newcommand{\eqref}[1]{(\ref{#1})}
\newtheorem{theorem}{Theorem}[section]
\newtheorem{lemma}[theorem]{Lemma}
\newenvironment{proof}[1][Proof]{\begin{trivlist}
\item[\hskip \labelsep {\bfseries #1}]}{\end{trivlist}}
\newenvironment{remark}[1][Remark]{\begin{trivlist}
\item[\hskip \labelsep {\bfseries #1}]}{\end{trivlist}}
\begin{document}
\paper[Spectrum of discrete Schr\"odinger operator with even potential]{On the spectrum of the discrete $1d$ Schr\"odinger  
operator with an arbitrary even potential}
\author{ S.~B.~Rutkevich}  
\address{Fakult\"at f\"ur Physik, Universit\"at Duisburg-Essen, D-47058 Duisburg, Germany}
\ead{sergei.rutkevich@uni-due.de}
\begin{abstract}{The  discrete one-dimensional Schr\"odinger operator 
 is studied  in the finite interval of length $N=2 M$ with the Dirichlet boundary conditions and an arbitrary 
potential even with respect to the spacial reflections. It is shown, that the eigenvalues of such a discrete 
Schr\"odinger operator (Hamiltonian), which is represented by the  $2M\times2M$ tridiagonal matrix, satisfy  a set of
polynomial constrains. 
The most interesting constrain, which is explicitly obtained, leads to the effective Coulomb interaction
between the Hamiltonian  eigenvalues. In the limit $M\to\infty$, this constrain induces the requirement, which should  
satisfy the scattering date in the scattering problem for the discrete Schr\"odinger  
operator in the half-line. We obtain such a requirement in the simplest case of the Schr\"odinger operator,
which does not have bound and semi-bound states, and which 
potential has a compact support.
}
\end{abstract}
\pacs{03.65.Aa,03.65.Nk,05.30.-d}
\section{Introduction}
Consider the discrete Schr\"odinger eigenvalue problem in the one-dimensional chain having even number of sites $N=2M$, 
with  an arbitrary  real even potential \newline$V=\{v_j\}_{j=1}^N$:
\begin{eqnarray} \label{eig}
&&v_j\,\psi_l(j)+\left[2\psi_l(j)-\psi_l(j-1)-\psi_l(j+1)\right]=\lambda_l \psi_l(j), \\
&&v_{N+1-j}=v_j, \label{ev}\\
&&j=1,\ldots,N, \quad \quad
\lambda_1<\lambda_2<\ldots<\lambda_N .\nonumber
\end{eqnarray}
Eigenstates of  (\ref{eig}) are
subjected to the Dirichlet boundary conditions 
\begin{equation}\label{BC}
\psi_l(0)= \psi_l(N+1)=0.
\end{equation}

The discrete Sturm-Liouvelle problem \eqref{eig}-\eqref{BC} without the parity constrain \eqref{ev} plays an important role
in the theory of Anderson localization \cite{car1990,past2011}. The problem \eqref{eig}-\eqref{BC} with an even potential 
\eqref{ev} naturally arrises
in the context of the theory of the thermodynamic Casimir effect \cite{DGHHRSxxx12}. 

It is proved in this paper, that the eigenvalues of the discrete Sturm-Liouville problem \eqref{eig}-\eqref{BC} satisfy the 
following equality:
\begin{equation}\label{pr0}
\prod_{m=1}^M\prod_{n=1}^M(\lambda_{2m-1}-\lambda_{2n})=2^M\,(-1)^{M(M+1)/2}.
\end{equation}

One can easily check, that \eqref{pr0} is satisfied for  small $M=1,2,\ldots$ For arbitrary natural $M$, relation \eqref{pr0} 
is proved in Section \ref{SLP}. Section \ref{SchPr} contains some well-known basic facts about the 
scattering problem in the half-line for the discrete Schr\"odinger operator. 
In the limit $M\to\infty$, equality \eqref{pr0} leads to certain constrains on the scattering data in such a problem, which are derived in Section \ref{infM}.
Concluding remarks are given in Section \ref{conc}. 
Proof of \eqref{pr0} for the free case $v_j=0$, $j=1,\ldots,M$ is presented in \ref{free}. 
\section{Discrete Sturm-Liouville problem in the finite interval   \label{SLP}}
In the case of zero potential $v_j=0$, the
solution of (\ref{eig})-\eqref{BC}  reads as
\begin{eqnarray}
\psi_l(j)=\sin(k_l j),\\\label{fr-eig}
\lambda_l= \omega(k_l),\\
k_l=\frac{\pi l}{N+1}, \label{kl}
\end{eqnarray}
with  
\begin{equation}\label{spe}
\omega(p)=4\sin^2(p/2), 
\end{equation}
and $l=1,\ldots,N$.

For a general real even potential $v_j$, the eigenstates
$\psi_{2m-1}(j)$, $m=1,\ldots,M$  are even with respect to the reflection
\begin{equation}
\psi_{2m-1}(N+1-j)=\psi_{2m-1}(j),
\end{equation}
whereas eigenstates $\psi_{2m}(j)$, $m=1,\ldots,M$  are odd:
\begin{equation}
\psi_{2m}(N+1-j)=-\psi_{2m}(j).
\end{equation}

It is useful to consider two associated eigenvalue problems for the even and odd states, which
are restricted to the half-chain $j=1,\ldots,M$.
The  eigenvectors $\psi_{2m-1}(j)$, $j=1,\ldots, M$,  are the eigenstates
of the tridiagonal $M\times M$ matrix $H^{(ev)}$:
\begin{eqnarray}\label{Hev}
H^{(ev)}=\left(\begin{array}{ccccccc}
b_1 &-1&0&0&0&\dots&0\\
-1&b_2&-1&0&0&\dots&0\\
0&-1&b_3&-1&0&\dots&0\\
\dots&\dots&\dots&\dots&\dots&\dots&\dots\\
0&0&\dots&0&-1&b_{M-1}&-1\\
0&0&\dots&0&0&-1&b_M-1
\end{array}\right),\\
b_j= v_j+2, \nonumber
\end{eqnarray}
with eigenvalues $\mu_m=\lambda_{2m-1}$, $m=1,\ldots,M$.
Similarly, the
eigenvectors $\psi_{2m}(j)$, $j=1,\ldots, M$,  are the eigenstates
of the tridiagonal $M\times M$ matrix $H^{(od)}$:
\begin{equation}\label{Hod}
H^{(od)}=\left(\begin{array}{ccccccc}
b_1 &-1&0&0&0&\dots&0\\
-1&b_2&-1&0&0&\dots&0\\
0&-1&b_3&-1&0&\dots&0\\
\dots&\dots&\dots&\dots&\dots&\dots&\dots\\
0&0&\dots&0&-1&b_{M-1}&-1\\
0&0&\dots&0&0&-1&b_M+1
\end{array}\right),
\end{equation}
with eigenvalues $\nu_m =\lambda_{2m}$, $m=1,\ldots,M$.
Note, that the matrices $H^{(ev)}$ and $H^{(od)}$ are simply related
\begin{equation} \label{P}
H^{(od)}-H^{(ev)}=2 P,
\end{equation}
with the projecting  matrix  $P_{m,m'}=\delta_{m,M}\delta_{m',M}$,
$\;\;m,m'=1,\ldots,M$, and ${\rm rank}\,P=1$.

It is convenient to allow the potential  $\{b_j\}_{j=1}^M$ in the diagonal of the matrices \eqref{Hev},  \eqref{Hod}  to take complex values.
\begin{lemma} The matrices $H^{(od)}$ and $H^{(ev)}$ defined by \eqref{Hev}, \eqref{Hod} \label{Le}
have no common eigenvalues for arbitrary complex
potential $\{b_j\}_{j=1}^M$.
\end{lemma}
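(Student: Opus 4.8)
The plan is to reduce the statement to a property of the leading principal minors of the tridiagonal block that $H^{(ev)}$ and $H^{(od)}$ have in common. Let $p_k(\lambda)$ denote the characteristic polynomial of the top-left $k\times k$ submatrix carrying the diagonal $b_1,\dots,b_k$ and off-diagonal entries $-1$. Expanding this determinant along its last row yields the standard three-term recurrence
\begin{equation}
p_0(\lambda)=1,\qquad p_1(\lambda)=\lambda-b_1,\qquad p_k(\lambda)=(\lambda-b_k)\,p_{k-1}(\lambda)-p_{k-2}(\lambda).
\end{equation}
Since $H^{(ev)}$ and $H^{(od)}$ coincide with this block in all but their bottom-right entry, namely $b_M-1$ versus $b_M+1$, the same cofactor expansion applied to the full $M\times M$ matrices gives
\begin{equation}
\det(\lambda I-H^{(ev)})=p_M(\lambda)+p_{M-1}(\lambda),\qquad \det(\lambda I-H^{(od)})=p_M(\lambda)-p_{M-1}(\lambda).
\end{equation}
I regard isolating these two expressions as $p_M\pm p_{M-1}$ as the crux of the argument, since everything else follows at once.

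Next I would argue by contradiction. A common eigenvalue $\lambda$ makes both determinants vanish; adding and subtracting the two identities forces $p_M(\lambda)=0$ and $p_{M-1}(\lambda)=0$ simultaneously. Substituting this into the recurrence at level $k=M$, rewritten as $p_{M-2}=(\lambda-b_M)\,p_{M-1}-p_M$, gives $p_{M-2}(\lambda)=0$, and a downward induction then yields $p_k(\lambda)=0$ for every $k$, in particular $p_0(\lambda)=1=0$. This absurdity holds for any complex choice of $\{b_j\}_{j=1}^M$, which proves the lemma.

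I expect the argument to be essentially mechanical once the recurrence is in place; the only point needing care is the sign bookkeeping, so that the off-diagonal $-1$ contributes $+1$ to the recurrence and the two characteristic polynomials come out with the correct relative sign $p_M\pm p_{M-1}$. As a conceptual cross-check, I would note the equivalent eigenvector formulation: because all subdiagonal entries equal $-1\neq0$, any eigenvector is uniquely fixed by its first component through rows $1,\dots,M-1$, which are identical for the two matrices; a shared eigenvalue would thus force a common eigenvector, and the differing last rows would then impose $u_M=0$ and hence, running the recurrence backwards, $u\equiv0$ — the same contradiction. I would nevertheless present the polynomial version as the main proof, since it is self-contained and makes the rank-one structure $H^{(od)}-H^{(ev)}=2P$ of \eqref{P} transparent.
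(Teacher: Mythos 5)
Your proof is correct, and it takes a genuinely different route from the one in the paper. The paper argues via the rank-one relation $H^{(od)}-H^{(ev)}=2P$ together with the symmetry of the matrices: pairing an $H^{(ev)}$-eigenvector $x$ with an $H^{(od)}$-eigenvector $y$ for the same $\Lambda$ in the bilinear form $\sum_j y_j x_j$ yields $y_M x_M=0$, and then the three-term recurrence on the \emph{eigenvector components} (started from the vanishing last component) forces one of the eigenvectors to be identically zero. Your main argument instead works at the level of characteristic polynomials: the identities $\det(\lambda I-H^{(ev)})=p_M+p_{M-1}$ and $\det(\lambda I-H^{(od)})=p_M-p_{M-1}$ are correct (I checked the cofactor expansion and the signs), so a common root would give $p_M(\lambda)=p_{M-1}(\lambda)=0$, and running the recurrence $p_{k-2}=(\lambda-b_k)p_{k-1}-p_k$ downward contradicts $p_0\equiv 1$. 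This is the classical fact that consecutive polynomials in a three-term recurrence with nonvanishing off-diagonal terms share no zeros, and it makes the lemma completely elementary and self-contained, with no appeal to symmetry of the matrices. What the paper's argument buys in exchange is a mechanism that isolates \emph{where} the perturbation acts: the identity $y_Mx_M=0$ comes solely from ${\rm rank}\,P=1$ and the support of $P$ on the last coordinate, so that version of the argument transfers verbatim to any symmetric rank-one perturbation localized at a site where eigenvector components cannot vanish. Your eigenvector cross-check (identical first $M-1$ rows force a common eigenvector, the differing last rows force $u_M=0$) is also sound and is in fact closer in spirit to the second half of the paper's proof; either presentation would be acceptable.
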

\begin{proof}
We will assume  that the matrices $H^{(ev)}$ and $H^{(od)}$ have a common
eigenvalue $\Lambda$ and come to  contradiction.

So, let us suppose that
\[
\sum_{j'=1}^M H^{(ev)}_{j,j'}x_{j'}=\Lambda\, x_j, \quad
\sum_{j'=1}^M H^{(od)}_{j,j'}y_{j'}=\Lambda \,y_j,
\]
with nonzero vectors $\{x_j\}_{j=1}^M$, $\{y_j\}_{j=1}^M$.
We get
\begin{eqnarray*}
\fl \Lambda \sum_{j=1}^M y_j\,x_j=\sum_{j=1}^M \sum_{j'=1}^M y_jH^{(ev)}_{j,j'}x_{j'}=
\sum_{j=1}^M \sum_{j'=1}^M y_j(H^{(od)}_{j,j'}-2P_{j,j'})x_{j'}=\\
\fl\sum_{j=1}^M \sum_{j'=1}^M y_jH^{(od)}_{j,j'}x_{j'}
-2\sum_{j=1}^M \sum_{j'=1}^M y_jP_{j,j'}x_{j'}=-2 y_M\,x_M+\Lambda \sum_{j=1}^M y_j\,x_j.
\end{eqnarray*}
Here we have taken into account, that the matrix $H^{(od)}$ is symmetric.
Thus,
\[
y_M\,x_M=0,
\]
which means that at least one of the numbers $y_M$ and $x_M$ is zero.
However, if $y_M=0$, we conclude immediately\begin{footnote} 
{Really, if $0=y_M\equiv \psi(j=M)$, then $\psi(j=M+1)=-\psi(j=M)=0$,
since $\psi(j)=-\psi(2M+1-j)$ for all $j=1,\dots,2M$.
And since the wave-function $\psi(j)$ takes zero values at two neighbor
sites $\psi(M)=\psi(M+1)=0$, one can check recursively from (\ref{eig}),
that $\psi(M-1)=0, \;\psi(M-2)=0, \;\dots,\; \psi(1)=0$,
and, therefore, $\psi(j)=0$ for all $j=1,\dots,2M$.}
\end{footnote}, that $y_m=0$ for all $m=1,\ldots,M$,
providing that $\Lambda$ {\it is not} an eigenvalue of $H^{(od)}$.
Similarly, if $x_M=0$, we conclude, that $x_m=0$ for all $m=1,\ldots,M$,
providing that $\Lambda$ {\it is not} an eigenvalue of $H^{(ev)}$.
This contradiction with the initial assumption proofs the Lemma.
\end{proof}

The sets of eigenvalues $\{\mu_m\}_{m=1}^M$ and $\{\nu_m\}_{m=1}^M$
of the matrices  $H^{(ev)}$ and  $H^{(od)}$ are not independent, but are
subjected to certain polynomial constrains following from (\ref{P}):
\begin{eqnarray} \label{c}
\fl\sum_{m=1}^M \mu_m={\rm Tr}H^{(ev)}=-2+{\rm Tr}H^{(od)}=-2+\sum_{m=1}^M \nu_m,\\\nonumber
\fl\sum_{m=1}^M \mu_m^2={\rm Tr}(H^{(ev)})^2={\rm Tr}(-2P+H^{(od)})^2=-4b_M+\sum_{m=1}^M \nu_m^2,\\\nonumber
\fl\sum_{m=1}^M \mu_m^3={\rm Tr}(H^{(ev)})^3={\rm Tr}(-2P+H^{(od)})^3=-8-6b_M^2+\sum_{m=1}^M \nu_m^3,\\\nonumber
\fl\sum_{m=1}^M \mu_m^4={\rm Tr}(H^{(ev)})^4={\rm Tr}(-2P+H^{(od)})^4
=-8b_{M-1}-24b_M-8b_M^2+\sum_{m=1}^M \nu_m^4,\\\nonumber
\fl\sum_{m=1}^M \mu_m^5={\rm Tr}(H^{(ev)})^5={\rm Tr}(-2P+H^{(od)})^5=\\ \nonumber
\fl -32-10b_{M-1}^2-
20 b_{M-1}b_M-50 b_M^2-10b_M^4+\sum_{m=1}^M \nu_m^5,\\\nonumber
\fl\ldots\ldots\ldots\ldots\ldots\ldots\ldots\ldots\ldots\ldots\ldots\ldots\ldots
\ldots\ldots\ldots\ldots\ldots\ldots\ldots.
\end{eqnarray}
\begin{remark}
\item{(1)} Equations (\ref{c}) provide a simple way to solve the inverse
spectral problem, i.e. to determine one by one the
potential $b_M, \, b_{M-1}, \ldots,b_1$,  if the
both sets of eigenvalues
$\{\mu_m\}_{m=1}^M$ and $\{\nu_m\}_{m=1}^M$ are known.
\item{(2)} Excluding one by one the
potential $b_M, \, b_{M-1}, \ldots,b_1$ from equations \eqref{c}, one can obtain the infinite set 
of polynomial constrains of increasing degrees on the eigenvalues $\{\mu_m\}_{m=1}^M$, $\{\nu_m\}_{m=1}^M$.
No more than $M$ of constrains in this set can be independent, since the above mentioned eigenvalues 
are determined 
by $M$ parameters 
$\{v_j\}_{j=1}^M$ as zeroes of  the characteristic polynomials of the matrices \eqref{Hev} and \eqref{Hod}. 
\item{(3)} One can easily see from (\ref{c}), that
the symmetric polynomials of eigenvalues $\{\mu_m\}_{m=1}^M$,
as well as the symmetric polynomials of eigenvalues $\{\nu_m\}_{m=1}^M$,
can be written as polynomial functions of the potential $\{b_j\}_{j=1}^M$.
\end{remark}
Now we are ready to prove the main
\begin{theorem} For arbitrary complex numbers $\{b_j\}_{j=1}^M$, the eigenvalues $\{\mu_m\}_{m=1}^M$ and  
$\{\nu_m\}_{m=1}^M$ of the 
matrices $H^{(od)}$ and $H^{(ev)}$ defined by \eqref{Hev}, \eqref{Hod} satisfy the equality:
\begin{equation}\label{pr}
\prod_{m=1}^M\prod_{n=1}^M(\mu_m-\nu_{n})=2^M\,(-1)^{M(M+1)/2}.
\end{equation}
\end{theorem}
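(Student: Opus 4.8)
The plan is to turn the double product into a resultant of characteristic polynomials and to exploit the rank-one relation \eqref{P}. Write $p^{(ev)}(\lambda)=\prod_{m=1}^M(\lambda-\mu_m)=\det(\lambda I-H^{(ev)})$ and $p^{(od)}(\lambda)=\prod_{n=1}^M(\lambda-\nu_n)=\det(\lambda I-H^{(od)})$, both monic of degree $M$. Since the $\nu_n$ are the roots of $p^{(od)}$, the inner product over $n$ is simply $p^{(od)}$ evaluated at $\mu_m$, so $\prod_{m,n}(\mu_m-\nu_n)=\prod_{m=1}^M p^{(od)}(\mu_m)$. Now $H^{(ev)}$ and $H^{(od)}$ differ only in their $(M,M)$ entry, where by \eqref{P} the $H^{(od)}$ entry $\lambda-b_M-1$ of $\lambda I-H^{(od)}$ is smaller by $2$ than the $H^{(ev)}$ entry $\lambda-b_M+1$. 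By multilinearity of the determinant in that single entry (equivalently, the matrix--determinant lemma for the rank-one update $2P=2e_Me_M^{\rm T}$), the change of the determinant equals the entry change times the corresponding cofactor, i.e. $p^{(od)}(\lambda)=p^{(ev)}(\lambda)-2\,r(\lambda)$, where $r(\lambda)$ is the determinant of the common upper-left $(M-1)\times(M-1)$ block of $\lambda I-H$. Evaluating at $\lambda=\mu_m$ and using $p^{(ev)}(\mu_m)=0$ gives $p^{(od)}(\mu_m)=-2\,r(\mu_m)$, hence $\prod_{m,n}(\mu_m-\nu_n)=(-2)^M\prod_{m=1}^M r(\mu_m)$.

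Next I would compute the remaining resultant $\prod_{m=1}^M r(\mu_m)$ through the three-term recurrence of the tridiagonal minors. Let $D_j(\lambda)$ be the characteristic polynomial of the upper-left $j\times j$ block of the pure tridiagonal matrix with diagonal $b_1,\dots,b_j$ and off-diagonal entries $-1$, so that $D_j=(\lambda-b_j)D_{j-1}-D_{j-2}$ with $D_0=1$, $D_1=\lambda-b_1$. Here $r=D_{M-1}$, while the shifted last entry of $H^{(ev)}$ (cf. \eqref{Hev}) gives $p^{(ev)}=(\lambda-b_M+1)D_{M-1}-D_{M-2}$. Writing $\{\rho_k\}_{k=1}^{M-1}$ for the roots of $r=D_{M-1}$, the two double products $\prod_m\prod_k(\mu_m-\rho_k)$ and $\prod_k\prod_m(\rho_k-\mu_m)$ differ by the factor $(-1)^{M(M-1)}=1$, so $\prod_m r(\mu_m)=\prod_k p^{(ev)}(\rho_k)$. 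Since $D_{M-1}(\rho_k)=0$, the recurrence yields $p^{(ev)}(\rho_k)=-D_{M-2}(\rho_k)$, whence $\prod_m r(\mu_m)=(-1)^{M-1}\prod_k D_{M-2}(\rho_k)=(-1)^{M-1}\,{\rm Res}(D_{M-1},D_{M-2})$.

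It then remains to evaluate the consecutive resultants $R_j:={\rm Res}(D_j,D_{j-1})=\prod_{\alpha:\,D_{j-1}(\alpha)=0}D_j(\alpha)$. At any root $\alpha$ of $D_{j-1}$ the recurrence again gives $D_j(\alpha)=-D_{j-2}(\alpha)$, so $R_j=(-1)^{j-1}\prod_{\alpha}D_{j-2}(\alpha)=(-1)^{j-1}R_{j-1}$, using that ${\rm Res}$ of the two monic polynomials is symmetric up to the even sign $(-1)^{(j-1)(j-2)}$. With the base value $R_1={\rm Res}(D_1,D_0)=1$, this recursion telescopes to $R_j=(-1)^{(j-1)j/2}$, and in particular ${\rm Res}(D_{M-1},D_{M-2})=(-1)^{(M-2)(M-1)/2}$.

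Assembling the factors, $\prod_{m,n}(\mu_m-\nu_n)=(-2)^M(-1)^{M-1}(-1)^{(M-2)(M-1)/2}$. Since $(-2)^M(-1)^{M-1}=-2^M$, this equals $2^M(-1)^{1+(M-2)(M-1)/2}$; and because $M(M+1)/2-[\,1+(M-2)(M-1)/2\,]=2M-2$ is even, the exponent may be replaced by $M(M+1)/2$, giving exactly $2^M(-1)^{M(M+1)/2}$ as in \eqref{pr}. I expect no deep obstacle: the one conceptual point is the clean reduction $p^{(od)}=p^{(ev)}-2r$ together with the observation that the relevant minor is independent of the $(M,M)$ entry, and the real care lies in the sign bookkeeping — tracking the parities in the resultant symmetry, in the recursion $R_j=(-1)^{j-1}R_{j-1}$, and in the final reconciliation of the exponents. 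Lemma \ref{Le} is consistent with (though not needed for) this computation, as it guarantees the product never vanishes, matching the nonzero constant on the right-hand side.
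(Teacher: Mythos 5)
Your argument is correct, and it is a genuinely different proof from the one in the paper. The paper argues structurally: the left-hand side of \eqref{pr} is symmetric in the $\mu$'s and in the $\nu$'s separately, hence a polynomial $Q_M(b)$ in the potential; Lemma \ref{Le} shows this polynomial has no zeros in $\mathbb{C}^M$, hence it is a constant; and the constant is then pinned down by evaluating the free case $b_j=2$, which requires the trigonometric product identity proved in \ref{free}. You instead compute the resultant directly: the rank-one relation \eqref{P} gives $p^{(od)}=p^{(ev)}-2D_{M-1}$ by the matrix--determinant lemma, so $\prod_{m,n}(\mu_m-\nu_n)=(-2)^M\operatorname{Res}(p^{(ev)},D_{M-1})$, and the three-term recurrence for the leading principal minors collapses the remaining resultant through the telescoping relation $R_j=(-1)^{j-1}R_{j-1}$, $R_1=1$. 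I checked the sign bookkeeping (the $(-1)^{M(M-1)}$ and $(-1)^{(j-1)(j-2)}$ symmetry factors, the final reduction of $1+(M-2)(M-1)/2$ to $M(M+1)/2$ modulo $2$, and the low cases $M=1,2$); it is all consistent. Your route is more elementary and self-contained: it needs neither the appendix's product formula nor the ``zero-free polynomial is constant'' step, and it delivers Lemma \ref{Le} as a corollary rather than an input, since the resultant comes out as an explicit nonzero constant. What the paper's route buys in exchange is the conceptual explanation of \emph{why} the product is potential-independent (via the trace identities \eqref{c} and Remark (3)), which ties into its discussion of the inverse spectral problem; your computation establishes the identity but hides that mechanism inside the cancellation of the minors.
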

By restriction of this result to real  $\{b_j\}_{j=1}^M$, we arrive to \eqref{pr0}.
\begin{proof}
The left-hand side of (\ref{pr}) is a symmetric polynomial function of $\{\mu_m\}_{m=1}^M$.
It is also a a symmetric polynomial function of $\{\nu_m\}_{m=1}^M$. Due to Remark (3),
we can conclude, that the left-hand side of (\ref{pr}) can be written as a polynomial function $Q_M(b)$ of
the potential $\{b_j\}_{j=1}^M$:
\begin{equation}\label{pr1}
\prod_{m=1}^M\prod_{n=1}^M(\mu_m-\nu_{n})=Q_M(b).
\end{equation}
It follows from this relation and  Lemma \ref{Le},
that the polynomial  function $Q_M(b)$ of $M$ complex variables
$\{b_j\}_{j=1}^M$ has no zeros. This means, that this  function is a
constant, $Q_M(b)\equiv C_M$, which {\underline {must not}} depend on the potential  $\{b_j\}_{j=1}^M$.

One can now determine this constant $C_M$ using an appropriate convenient  choice of the
potential. To this end, we  put
\begin{equation}
b_j=2, \quad j=1,\ldots,M,
\end{equation}
which corresponds to $v_j=0$, $j=1,\ldots,M$.
Reminding  (\ref{fr-eig}), we get
\begin{equation}
\mu_m= 4 \sin^2\frac{ (2m-1)\pi}{2(2M+1)} , \quad
\nu_m= 4 \sin^2\frac{2 m \pi}{2(2M+1)},
\end{equation}
and equality we need to prove for all natural $M$  takes the form
\begin{equation}\label{pr2}
\fl\prod_{m=1}^M\prod_{n=1}^M\left[4 \sin^2\frac{ (2m-1)\pi}{2(2M+1)}-
 4 \sin^2\frac{2 n \pi}{2(2M+1)}\right]=2^M\,(-1)^{M(M+1)/2}.
\end{equation}
Proof of this formula is given in \ref{free}.
\end{proof}

It is interesting to note, that equality \eqref{pr0} allows the electrostatic interpretation. 
Really, let us take the logarithm of the absolute values of the both sides of \eqref{pr0}, 
and rewrite the result in the form
\begin{equation}\label{UCoul}
-\sum_{m=1}^M\sum_{n=1}^M\ln|x_m^{(A)}-x_n^{(B)}|=-M\, \ln 2,
\end{equation}
where $x_m^{(A)}=\lambda_{2m-1}$, and $x_n^{(B)}=\lambda_{2n}$, with $m,n=1,\ldots,M$ will be 
treated as space coordinates of two different sets  of $M$ particles of types $A$ and $B$,
which are 
distributed along the  $x$-axis in the two-dimensional plane. Particles of the  $A$ type interlace with particles
of the $B$ type, $x_m^{(A)}<x_m^{(B)}<x_{m+1}^{(A)}.$ 
If particles of the same type do not interact with each other,
and particles of different types interact via the pair 
$2d$ Coulomb potential $u(x^{(A)},x^{(B)})=-\ln|x^{(A)}-x^{(B)}|$, then
equation \eqref{UCoul} states simply, that the total Coulomb energy of this system of  $2M$ particles 
should be equal to
$-M\ln 2$.
\section{Scattering problem for the discrete Schr\"odinger operator in the half-line\label {SchPr}}
In this Section we briefly summarize some well-known basic results from the
scattering theory in the half-line (see, for example \cite{ChaSab,Ca73}) adapted for the the 
discrete Schr\"odinger operator \cite{car1990,past2011}.

Consider the discrete Schr\"odinger equation (\ref{eig}) in the half-line $j\in \mathbb{N}$ 
\begin{eqnarray}\label{Sch}
 \left(H \psi\right)_j=\lambda\, \psi(j),\\
  \left(H \psi\right)_j=    \label{H}
 v_j\,\psi(j)+\left[2\psi(j)-\psi(j-1)-\psi(j+1)\right], \\
 j=1,2, \ldots,\infty,  \nonumber
\end{eqnarray}
supplemented with the Dirichlet boundary condition 
\begin{equation}\label{Dbc}
\psi(0)=0.
\end{equation}
The potential $V=\{v_j\}_{j=1}^\infty$ in \eqref{Sch} is the infinite sequence of real numbers. In the scattering theory, 
the potential should vanish fast enough at infinity. It is usually required \cite{ChaSab}, that 
\begin{equation}
\sum_{j=1}^\infty j |v_j|<\infty.\label{Vinf}
\end{equation}
For such a potential, the  spectrum $\sigma[H]$ of the operator  $H$ defined by 
\eqref{Sch}-\eqref{Dbc} consists of the continuous part $\sigma_{cont}[H]=(0,4)$ 
and a finite number of  discrete eigenvalues.

At a given $\lambda$, equations \eqref{Sch}, \eqref{H}  with omitted boundary condition \eqref{Dbc}
 have two linearly  independent solutions, and the general 
solution of  \eqref{Sch}, \eqref{H}  can be written as their linear combination. For two sequences $\{\psi_1(j)\}_{j=0}^\infty$, and 
$\{\psi_2(j)\}_{j=0}^\infty$, one can define the Wronskian 
\begin{equation}
W[\psi_1,\psi_2]_j=\psi_1(j)\psi_2(j+1)-\psi_1(j+1)\psi_2(j), \quad j=0,1,2,\ldots
\end{equation}
It is straightforward to check, that the Wronskian of two solutions of equations \eqref{Sch}, \eqref{H} does not depend on $j$.

Let us turn now to the scattering problem associated with equations\eqref{Sch}-\eqref{Dbc}, which 
corresponds to  the case $0<\lambda<4$.
Instead of parameter $\lambda\in \sigma_{cont}[H]$, it is  also convenient  to  use the  momentum $p$ and the related complex
parameter $z=\rm\rme^{\rmi p}$:
\[
\lambda=2-2 \cos p=2-z-z^{-1}.
\]
Three solutions of  \eqref{Sch}, \eqref{Dbc} are important for the scattering problem.
\begin{itemize}
\item
The {\it regular solution} $\varphi(j,p)$, which is fixed by the boundary condition
\begin{equation}\label{bc8}
\varphi(0,p)=0, \quad \varphi(1,p)=1.
\end{equation}
\item
Two {\it Jost solutions} $f(j,p)$, and $f(j,-p)$, which are determined by their behavior at large $j\to\infty$,
and describe the out- and in-waves, respectively,
\begin{equation}\label{Jas}
f(j,\pm p)\to  \exp(\pm \rmi   p j)=z^{\pm j}, \quad{\rm {at}}\quad  j\to\infty.
\end{equation}
\end{itemize}
The regular solution $\varphi(j,p)$ can be represented  as a linear combination of two Jost
solutions,
\begin{equation}\label{FJ}
\varphi(j,p)=\frac{\rmi }{2\sin p}\left[F(p)f(j,-p)-F(-p)f(j,p)\right], \quad 0<p<\pi.
\end{equation}
The complex coefficient $F(p)$ in the above equation is known as  the Jost function.
It is determined by \eqref{FJ} for 
 real momenta $p$ in the 
 interval $p\in (-\pi,\pi)$, where it satisfies the relation
 \begin{equation}
 F(-p)=[F(p)]^*,
 \end{equation}
and  can be written as
\begin{equation}
F(p)=\exp[{\sigma(p)- \rmi  \eta(p) }].
\end{equation}
At large $j\to\infty$, the regular solution behaves as
\[
\varphi(j,p)\to\frac{A(p)}{\sin p} \sin[p\, j+\eta(p)], \quad j\to +\infty,
\]
where  $A(p)=\exp [\sigma (p)]$ is the scattering amplitude, and $\eta(p)$ is the scattering phase.
The latter can be defined in such a way, that $\eta(-p)=-\eta(p)$ for $-\pi<p<\pi$.

The following exact representation holds for the Jost function $F(p)$ in terms of the regular solution $\varphi(j,p)$:
\begin{equation} \label{JF}
F(p)=1+\sum_{j=1}^\infty \rme^{\rmi  pj}v_j \varphi(j,p),
\end{equation}
cf. equation (1.4.4) in \cite{ChaSab} in the continuous case.

For $|z|=1$, denote by $\hat{F}(z)$ the Jost function $F(p)$ expressed in the complex parameter $z$: $F(p)=\hat{F}(z=\rme^{\rmi  p})$.
The function  $\hat{F}(z)$ can be analytically continued into the circle $|z|<1$, where it has finite number of 
zeros $\{a_n\}_{n=1}^\mathfrak{N}$. These zeroes determine the discrete spectrum $\{\lambda_n\}_{n=1}^\mathfrak{N}$  of the
problem \eqref{Sch}-\eqref{Dbc}:
\begin{equation}
\lambda_n=2-a_n-a_n^{-1}, \quad {\rm for  } \quad n=1,\ldots,\mathfrak{N}.
\end{equation}
Of course, these eigenvalues are real in the boundary problem with a real potential. 

To simplify further analysis, we shall consider 
in the sequel the potentials which satisfy  the following requirements:
\begin{enumerate}
\item \label{Vcs}
The potential $V$ should have a compact support, i.e.
\begin{equation}\label{vj0}
v_j=0, \quad {\rm for\; all}\quad j>J,
\end{equation}
with some natural $J$.
\item \label{Jzl1}
The corresponding Jost function $\hat{F}(z)$ should not have zeroes inside the circle $|z|<1$, i.e. 
$\mathfrak{N}=0$. In other words, 
the spectrum $\sigma[H]$ should be purely continuous. 
\item \label{Jzpm1}
The Jost function $\hat{F}(z)$ should take non-zero values at $z=\pm 1$: $\hat{F}(1)\ne 0$, and $\hat{F}(-1)\ne 0$.
\end{enumerate}
Conditions \eqref{Jzl1} and \eqref{Jzpm1} imply, that the operator $H$ does not have bound and semi-bound states 
\cite{ChaSab}, 
respectively.  

For the potential satisfying \eqref{vj0}, only $J$ initial terms survive  in the sum in the right-hand side of \eqref{JF}.
Since $ \varphi(j,p)$ is a polynomial of the spectral parameter $\lambda=2-z-z^{-1}$  of the order $j-1$, 
the Jost function \eqref{JF} expressed in the parameter $z$ is a polynomial of the degree $2J-1$:
\begin{equation}
\hat{F}(z)=1+\sum_{j=1}^{2J-1}c_j(V)\,z^j=\prod_{n=1}^{2J-1}\left[1-\frac{z}{a_n(V)}\right],
\end{equation}
where the coefficients $c_j(V)$ polynomially depend on the potential $v_j$, $j=1,\ldots,J$. 
Due to the constrains (\ref{Jzl1}),  (\ref{Jzpm1}), we get
\begin{equation}\label{a}
|a_n(V)|>1, \quad{\rm for\; all} \quad n=1,\ldots, J.
\end{equation}

Let us periodically continue the scattering phase $\eta(p)$ from the interval $(-\pi,\pi)$ to the whole real axis
$p\in \mathbb{R}$.
It follows from \eqref{a}, that for a potential satisfying (\ref{Vcs})-\eqref{Jzpm1}, the scattering phase 
 is an analytical odd $2\pi$-periodical function in the whole real axis:
 $\eta(p)\in C^\infty(\mathbb{R}/2\pi \mathbb{Z})$.
 
The notation $\delta(\lambda)$  will be used for the scattering phase $\eta(p)$ expressed in terms of the 
spectral parameter
$\lambda$: $\eta(p)=\delta(\lambda=2-2 \cos p)$, for $0\le\lambda\le4$, and $0\le p\le \pi$. Conditions 
\eqref{Jzl1}, \eqref{Jzpm1} guarantee, that 
\begin{equation}\label{del}
\delta(0)=\delta(4)=0.
\end{equation}
\section{Constrains on the scattering data in the discrete Schr\"odinger  scattering problem 
in the half-line \label{infM}}
It is shown in this Section, that the scattering phase $\delta(\lambda)$ in the boundary 
problem \eqref{Sch}-\eqref{Dbc}
for the discrete Schr\"odinger operator in the half-line 
with  an arbitrary potential $V$ obeying \eqref{Vcs}-\eqref{Jzpm1} should satisfy the constrain 
\begin{equation}\label{EQ}
\int_0^4 \rmd\lambda\,\delta(\lambda)\,
\frac{\lambda-2}{\lambda(\lambda-4)}+\frac{1}{\pi}\int_0^4 \rmd\lambda_1\, \delta(\lambda_1)\,\mathcal{P}\!\!
\,\int_0^4 \rmd\lambda_2\, \frac{\delta'(\lambda_2)}{\lambda_2-\lambda_1}=0,
\end{equation}
where $\mathcal{P}\!\!\int$ indicates the principal value integral. It is straightforward to rewrite the above constrain 
in the equivalent form in terms of the Jost function $\hat{F}(z)$:
\begin{equation}\label{Jost}
\fl{\ln \hat{F}(z=1)+\ln \hat{F}(z=-1)}+\oint_{|z|=1} \frac{\rmd z}{2\pi  \rmi }\, \ln [\hat{F}(1/z)]\frac{\rmd\ln[\hat{F}(z)]}{\rmd z}=0,
\end{equation}
where the integration path in the right-hand side is gone  in the counter-clockwise direction.

To prove \eqref{EQ}, let us consider the discrete Schr\"odinger  eigenvalue problem  \eqref{eig}-\eqref{BC}
in the finite interval $1\le j \le N=2M$, with $M>J$, and with the 
even potential $V^{(M)}=\{v_j^{(M)}\}_{j=1}^{2M}$, which restriction to the interval 
$[1,M]$ coincides with that of the potential $V=\{v_j\}_{j=1}^\infty$: 
\begin{equation}
v_j^{(M)}=\cases{
 v_j, \quad{\rm if} \quad j\le J,\\
  0, \quad{\rm if} \quad J<j\le 2M-J,\\
 v_{2M+1-j}, \quad{\rm if} \quad 2M-J< j \le 2M .
}
\end{equation}
It is easy to see, that the spectrum $\{\lambda_l\}_{l=1}^{2M}$ of the problem \eqref{eig}-\eqref{BC}
with such a potential can be expressed in terms of the scattering phase $\eta(p)$ of the 
corresponding  semi-infinite problem \eqref{Sch}-\eqref{Dbc} by the relations
\begin{eqnarray}
\lambda_l=\omega(p_l), \quad {l=1,\ldots,2M},\\
(2M+1)\,p_l +2\,\eta(p_l)=(2M+1)\,k_l, \label{pk}
\end{eqnarray}
where $\omega(p)=2-2 \cos p$, and $k_l=\pi l/(2M+1)$. Solving equation \eqref{pk} with respect to 
$p_l$ one obtains at large $M$:
\begin{equation}\label{pk3}
p_l=k_l-\frac{2\,\eta(k_l)}{2M+1}+\frac{4\, \eta(k_l)\,\eta'(k_l)}{(2M+1)^2}+O(M^{-3}).
\end{equation}

For an arbitrary $M>J$, we get from \eqref{pr0}:
\begin{equation}\label{Sm0}
\sum_{m=1}^M S_m=0,
\end{equation}
where
\begin{equation}\label{Sm}
S_m=\sum_{n=1}^M \left[
\ln|\omega(p_{2n-1})-\omega(p_{2m})|-\ln|\omega(k_{2n-1})-\omega(k_{2m})|
\right].
\end{equation}
Proceeding to the large-$M$ limit, one finds after substitution of  \eqref{pk3} into \eqref{Sm} and
expansion the result in $1/(2M+1)$:
\begin{equation}
S_m=S_m^{(0)}+ S_m^{(1)}+O(M^{-2}),
\end{equation}
where
\begin{eqnarray}\label{S0}
S_m^{(0)}=\frac{2}{2M+1}\sum_{n=1}^M \frac{\omega'(k_{2m})\,\eta(k_{2m})-\omega'(k_{2n-1})\,\eta(k_{2n-1})}
{\omega(k_{2n-1})-\omega(k_{2m})},\\\nonumber
S_m^{(1)}=\frac{2}{(2M+1)^2}\sum_{n=1}^M\Bigg\{
 \frac{\left[\omega'(k_{2n-1})\,\eta^2(k_{2n-1})\right]'-\left[\omega'(k_{2m})\,\eta^2(k_{2m})\right]'}
{\omega(k_{2n-1})-\omega(k_{2m})}\\
-\left[ \frac{\omega'(k_{2m})\,\eta(k_{2m})-\omega'(k_{2n-1})\,\eta(k_{2n-1})}
{\omega(k_{2n-1})-\omega(k_{2m})}
\right]^2\Bigg\}.\label{S1}
\end{eqnarray}
In the right-hand side of the second equation we can safely [up to the terms of order $O(M^{-2})$] replace the sum in $n$ by the 
integral over the momentum $q$:
\begin{eqnarray}\nonumber
S_m^{(1)}=\frac{1}{\pi(2M+1)}\int_0^\pi \rmd q\,\Bigg\{
 \frac{\left[\omega'(q)\,\eta^2(q)\right]'-\left[\omega'(k_{2m})\,\eta^2(k_{2m})\right]'}
{\omega(q)-\omega(k_{2m})}\\
-\left[ \frac{\omega'(k_{2m})\,\eta(k_{2m})-\omega'(q)\,\eta(q)}
{\omega(q)-\omega(k_{2m})}
\right]^2\Bigg\}+O(M^{-2}).\label{S1a}
\end{eqnarray}

Calculation of the large-$M$ asymptotics of $S_m^{(0)}$ is more delicate. 
First, we extend summation in \eqref{S0} in the index $n$ from 1 till $2M+1$
\begin{eqnarray}\nonumber
S_m^{(0)}=\frac{2}{2M+1}\sum_{n=1}^{M}R_m(k_{2n-1})=\\
\frac{2}{2M+1}\left[-
\frac{R_m(k_{2M+1})}{2}+\frac{1}{2}\sum_{n=1}^{2M+1} R_m(k_{2n-1})\right],\label{S0b}
\end{eqnarray}
where 
\begin{equation}\label{Rm}
R_m(q)=\frac{\omega'(k_{2m})\,\eta(k_{2m})-\omega'(q)\,\eta(q)}
{\omega(q)-\omega(k_{2m})}.
\end{equation}
In \eqref{S0b} we have taken into account the reflection symmetry  $R_m(q)=R_m(2\pi-q)$ 
of the function \eqref{Rm}, providing
$R_m(k_{2n-1})=R_m(k_{2(2M+1-n)+1})$.

Since $k_{2M+1}=\pi$, and $\eta(\pi)=0$, $\omega(\pi)=4$, we  get from \eqref{Rm}
\begin{equation}\label{Rm1}
R_m(k_{2M+1})=\frac{\omega'(k_{2m})\,\eta(k_{2m})}
{4-\omega(k_{2m})}.
\end{equation}
The sum  in the second line of \eqref{S0b} reads as 
\begin{equation}\label{Rm2}
\sum_{n=1}^{2M+1} R_m(k_{2n-1})=\sum_{n=1}^{2M+1} R_m\left(2\pi\,\frac{ n-1/2}{2M+1}\right).
\end{equation}
Since $R_m(q)\in C^\infty(\mathbb{R}/2\pi \mathbb{Z})$, 
this sum can be  replaced 
 with exponential accuracy by the integral at large $M\to\infty$:
\begin{equation}
\sum_{n=1}^{2M+1} R_m\left(2\pi\,\frac{ n-1/2}{2M+1}\right)=\frac{2M+1  }{2\pi}
\int_0^{2\pi}\rmd q \,R_m(q)+ o(M^{-\mu}),
\end{equation}
where $\mu$ is an arbitrary positive number, see formula 25.4.3 in \cite{AbrSt}.
Taking into account \eqref{Rm}, 
the integral in the right-hand side can be written as
\begin{eqnarray}\nonumber
\fl \int_0^{2\pi}\rmd q \,R_m(q)=\omega'(k_{2m})\,\eta(k_{2m}) \,\mathcal{P} \!\int_0^{2\pi}\frac{\rmd q }{\omega(q)-\omega(k_{2m})}
-\mathcal{P} \!\int_0^{2\pi} \rmd q\,\frac{\omega'(q)\,\eta(q)}{\omega(q)-\omega(k_{2m})}=\\ \label{Rint}
\fl 
2 \omega'(k_{2m})\,\eta(k_{2m})  \,\mathcal{P} \!\int_0^{\pi}
 \frac{\rmd q }{\omega(q)-\omega(k_{2m})}
-2\,I[\omega(k_{2m})]=-2\,I[\omega(k_{2m})],
\end{eqnarray}
where \begin{equation}\label{IL}
I(\Lambda)=\mathcal{P} \!\int_0^{4}
\rmd\lambda\,\frac{\delta(\lambda) }{\lambda-\Lambda}, \quad{\rm with}\quad 0<\Lambda<4.
\end{equation}
In the second line of \eqref{Rint} we have taken into account the equality 
\begin{eqnarray}\nonumber
 \mathcal{P}\!\int_0^\pi\, \frac{ \rmd q	}{[\omega(q)-\omega(k)]^\nu}\equiv \\
\fl\frac{1}{2}\lim_{\epsilon\to +0}\left\{
\int_0^\pi\, \frac{ \rmd q	}{[\omega(q+\rmi \epsilon)-\omega(k))]^\nu}+\int_0^\pi\, \frac{ \rmd q	}{[\omega(q-\rmi   \epsilon)-\omega(k)]^\nu}
\right\}=0,\label{eqP}
\end{eqnarray}
with $0<k<\pi$, and $\nu=1$.

Collecting \eqref{S0b}-\eqref{IL}, we get 
\begin{equation}\label{Sm0a}
S_m^{(0)}=-\frac{I[\omega(k_{2m})]}{\pi}-\frac{1}{2M+1}\frac{\omega'(k_{2m})\,\eta(k_{2m})}
{4-\omega(k_{2m})}+O(M^{-2}).
\end{equation}

Thus, we obtain from \eqref{Sm0a}  the equality
\begin{equation}\label{limS}
\lim_{M\to\infty}\sum_{m=1}^M S_m^{(0)}+\lim_{M\to\infty}\sum_{m=1}^MS_m^{(1)}=0,
\end{equation}
where $S_m^{(0)}$ and $S_m^{(1)}$ are given by equations \eqref{Sm0a}, and \eqref{S1}, respectively. 

In the second term, we can replace  with sufficient accuracy  the summation in $m$ by integration in the momentum $k$:
\begin{eqnarray}\nonumber
\sum_{m=1}^M S_m^{(1)}=\frac{1}{2\pi^2}\int_0^\pi \rmd k\int_0^\pi \rmd q\,
 \frac{\left[\omega'(q)\,\eta^2(q)\right]'-\left[\omega'(k)\,\eta^2(k)\right]'}
{\omega(q)-\omega(k)}\\
-\frac{1}{2\pi^2}\int_0^\pi \rmd k\int_0^\pi \rmd q\,\left[ \frac{\omega'(k)\,\eta(k)-\omega'(q)\,\eta(q)}
{\omega(q)-\omega(k)}
\right]^2+O(M^{-1}).\label{S1b}
\end{eqnarray}
The first integral in the right-hand side vanishes due to equality \eqref{eqP}
with $0<k<\pi$, and $\nu=1$. The second line in \eqref{S1b} can be transformed as follows
\begin{eqnarray}\label{aux}
-\frac{1}{2\pi^2}\int_0^\pi \rmd k\int_0^\pi \rmd q\,\left[ \frac{\omega'(k)\,\eta(k)-\omega'(q)\,\eta(q)}
{\omega(q)-\omega(k)}
\right]^2= \\
\fl -\frac{1}{\pi^2}\int_0^\pi \!\rmd k\, [\omega'(k)\,\eta(k)]^2\,\mathcal{P}\!\!\int_0^\pi  \,\frac{\rmd q }
{[\omega(q)-\omega(k)]^2}+
\frac{1}{\pi^2}\int_0^\pi \!\rmd k\,\mathcal{P}\!\!\int_0^\pi\!\rmd q \,\frac{  \omega'(k)\,\eta(k) \omega'(q)\,\eta(q)}
{[\omega(q)-\omega(k)]^2}.\nonumber
\end{eqnarray} 
The first term in the right-hand side vanishes due to equality \eqref{eqP} with $\nu=2$. 
Then, after a simple algebra we obtain from the second term in the right-hand side of \eqref{aux}
\begin{equation}\label{limS1}
\lim_{M\to\infty}\sum_{m=1}^MS_m^{(1)}=\frac{1}{\pi^2}\int_0^4 \rmd\lambda_1\, \delta(\lambda_1)\,\mathcal{P}\!\!
\,\int_0^4 \rmd\lambda_2\, \frac{\delta'(\lambda_2)}{\lambda_2-\lambda_1}.
\end{equation}

Let us turn  now to calculation of the first term in the left-hand side of equality \eqref{limS}. At large $M$, one obtains
\begin{eqnarray}\label{Sma}
\fl \sum_{m=1}^M S_m^{(0)} =-\frac{1}{2\pi}
\int_0^{\pi} \rmd k\,\frac{\omega'(k)\,\eta(k)}{4-\omega(k_{2m})}-\frac{1}{\pi}\sum_{m=1}^M I[\omega(k_{2m})]+O(M^{-1}).
\end{eqnarray}
The first term in the right-hand side equals to $I(4)/(2\pi)$.
The large-$M$ asymptotics of the sum in the right-hand side can be found as follows
\begin{eqnarray}\nonumber
\sum_{m=1}^M I[\omega(k_{2m})]=
-\frac{I(0)}{2}+\frac{1}{2}\sum_{m=1}^{2M+1} I[\omega(k_{2m})]=\\
 -\frac{I(0)}{2}+\frac{2M+1}{4\pi}\int_0^{2\pi} \rmd k\,I[\omega(k)]+O(M^{-\mu}),\label{S0d}
\end{eqnarray}
where $\mu$ is an arbitrary positive number. In deriving \eqref{S0d} we have taken 
into account, that $I[(\omega(k)]$ is the $2\pi$-periodical function of $p$ in $\mathbb{R}$, which is continuous 
with 
all its derivatives  [i.e.,  $I[(\omega(k)]\in C^\infty(\mathbb{R}/2\pi \mathbb{Z})$], and formula 25.4.3 in \cite{AbrSt}.
The integral in the right-hand side vanishes due to equality \eqref{eqP} with $\nu=1$:
\begin{equation}\label{Iint}
\fl\int_0^{2\pi} \rmd k\,I[\omega(k)]=\int_0^{2\pi} \rmd k\,\mathcal{P}\!\!\int_0^4 \rmd\lambda\, \frac{\delta(\lambda)}{\lambda-\omega(k)}=
\int_0^4 \rmd\lambda\, \delta(\lambda)\,\mathcal{P}\!\!\int_0^{2\pi} \frac{\rmd k}{\lambda-\omega(k)}=0.
\end{equation}
Collecting \eqref{Sma}-\eqref{Iint}, we come to the simple formula
\begin{equation}\label{limS0}
\lim_{M\to\infty}\sum_{m=1}^M S_m^{(0)} =\frac{I(0)+I(4)}{2\pi}.
\end{equation}
From \eqref{IL}, \eqref{limS}, \eqref{limS1}, \eqref{limS0}, we arrive to the final result \eqref{EQ}.

Similarly to \eqref{pr0}, equation \eqref{EQ} also admits the electrostatic interpretation.
Let us treat  the  function   $\rho(\lambda)=2\delta'(\lambda)/\pi$ as the electric charge density,
which is distributed in the linear interval $0<\lambda<4$. Requirement \eqref{del} implies, that
the total electric charge of this distribution is zero,
\[
\int_0^4 \rmd\lambda\, \rho(\lambda)=0.
\]
It is straightforward to rewrite \eqref{EQ} in terms of the function $\rho(\lambda)$:
\begin{eqnarray}\nonumber
-\frac{1}{2}\int_0^4 \rmd\lambda_1\,\rho(\lambda_1)\int_0^4 \rmd\lambda_2\,\rho(\lambda_2)\ln|\lambda_1-\lambda_2|-\\
\int_0^4 \rmd\lambda\,\rho(\lambda)\left[q_1\,
\ln \lambda+q_2\, \ln(4-\lambda)
\right]
-q_1 \,q_2 \ln 4=-\frac{\ln 2}{2},\label{Coul2}
\end{eqnarray}
where $q_1=q_2=-1/2$.
The left-hand side of the above equality  represents the  Coulomb energy of the continuous charge distribution 
$\rho(\lambda)$ located in the interval $(0,\lambda)$ in the two-dimensional plane, and 
two  point charges $q_1=q_2=-1/2$ located at the points $\lambda_1=0$ and $\lambda_2=4$.
\section{Conclusions \label{conc}}
We have studied some general  spectral properties of the one-dimensional Sturm-Liouville problem
for the discrete Schr\"odinger equation with the Dirichlet boundary conditions. Both cases of the finite-interval 
and semi-infinite problems were considered. 

For the finite-interval problem with even number of sites $2M$ and an arbitrary even potential, 
it was shown, that its eigenvalues should satisfy the infinite set of polynomial constrains of increasing 
degrees. Though the number of these constrains is infinite, no more than $M$ of them 
are independent. It is simple to find few
initial small-degree polynomials in this set, but explicit calculation of subsequent 
polynomials of higher degrees becomes more and more 
difficult. Nevertheless, we have obtained in the explicit form one polynomial constrain from this set, 
which has the degree $M$, see equation  \eqref{pr0}.
It leads to the effective Coulomb interaction between the eigenvalues, which correspond to even and odd
eigenstates.

The scattering problem for the discrete  one-dimensional 
Schr\"odinger equation  in the half-line has been analysed as the $M\to\infty$ limit of the 
described above $2M$-site discrete Sturm-Liouville problem in the finite-interval.
It was  proved, that the scattering phase 
of the discrete scattering problem \eqref{Sch}-\eqref{Dbc} should satisfy condition \eqref{EQ},
if: (i) the potential has a compact support,  (ii) the spectrum of the Hamiltonian is purely continuous, 
$\sigma[H]=(0,4)$, and (iii) the Jost function takes nonzero values on its end points $\lambda=0$ and
$ \lambda=4$. Constrain \eqref{EQ} admits the electrostatic interpretation \eqref{Coul2}, as its finite-interval
counterpart  \eqref{pr0}.

We did not try to prove \eqref{EQ} for the most general case of the discrete semi-infinite scattering problem. 
It is natural to expect, however, that
it should hold for some more general class of potentials, which  vanish fast enough at infinity, 
though do not have a compact support. On the other hand, in the case of the potentials with discrete spectrum
and/or semi-bound states (the latter appear if the Jost function has zeroes at the end points of the 
continuous spectrum, see \cite{ChaSab}), some modified forms of \eqref{EQ} should also exist.

We believe, that obtained results could be useful for the theory of Anderson localisation and for the
theory of random matrices.
\ack I am thankful to H.~W.~Diehl for fruitful discussions. 
Support of this work by Deutsche Forschungsgemeinschaft (DFG) via  grant Ru 1506/1 is also gratefully acknowledged.
\appendix
\section{Proof of equality \eqref{pr2} \label{free}}

Let us start from the following auxiliary
\begin{lemma} The following equality holds for all natural  $M$ and integer  $n$:
\begin{equation}\label{l2}
\fl\prod_{m=1}^{2M+1}\left\{4 \sin^2\left[\frac{ (2m-1)\pi}{2(2M+1)}-\alpha
\right]- 4 \sin^2\left[\frac{ 2n\pi}{2(2M+1)}\right]  \right\}=4\cos^2[\alpha(2M+1)].
\end{equation}
\end{lemma}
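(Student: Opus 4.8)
The plan is to convert the whole product to a rational function of the two exponentials $w = e^{2i\alpha}$ and $\xi = e^{i\pi/N}$, where I abbreviate $N = 2M+1$, and then exploit the fact that $\{\xi^{2m-1}\}_{m=1}^{N}$ is precisely the set of roots of $x^N+1$. First I would rewrite each bracketed factor using $4\sin^2 x = 2 - 2\cos 2x$, turning the $m$-th factor into $2\cos(2n\pi/N) - 2\cos\bigl(\frac{(2m-1)\pi}{N} - 2\alpha\bigr)$. Passing to exponentials with $u = \xi^{2m-1}$ and noting that $e^{2in\pi/N} = \xi^{2n}$, this factor becomes $\xi^{2n} + \xi^{-2n} - u w^{-1} - u^{-1} w$. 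Multiplying through by $-uw$ produces the quadratic $u^2 - uw(\xi^{2n}+\xi^{-2n}) + w^2$, whose discriminant is the perfect square $w^2(\xi^{2n}-\xi^{-2n})^2$, so it factors cleanly as $(u - w\xi^{2n})(u - w\xi^{-2n})$. Hence each factor of the original product equals $-\frac{1}{uw}(u - w\xi^{2n})(u - w\xi^{-2n})$.

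Next I would take the product over $m = 1,\ldots,N$, splitting it into a scalar prefactor and two linear families. The prefactor contributes $\prod_{m=1}^{N}(-uw)^{-1} = (-1)^{N}\,\xi^{-N^2}\,w^{-N}$; since $\sum_{m=1}^{N}(2m-1) = N^2$ and $\xi^{N^2} = e^{i\pi N} = (-1)^{N}$, this collapses to $w^{-N}$. For the two linear families I would use the key observation that $(\xi^{2m-1})^N = e^{i\pi(2m-1)} = -1$, so $\{\xi^{2m-1}\}_{m=1}^{N}$ is exactly the zero set of $x^N + 1$, giving $\prod_{m=1}^{N}(u - a) = (-1)^{N}(a^N + 1)$ for any constant $a$. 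Applying this with $a = w\xi^{\pm 2n}$, and using that $n$ is an integer so that $\xi^{\pm 2nN} = e^{\pm 2\pi i n} = 1$, both products equal $(-1)^{N}(w^N + 1)$. Assembling the three pieces gives $w^{-N}(w^N + 1)^2 = w^N + 2 + w^{-N} = 2 + 2\cos(2\alpha N) = 4\cos^2(\alpha N)$, which is the claim with $N = 2M+1$.

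The main obstacle is organizing the bookkeeping so that the two delicate points align correctly: the prefactor phase $\xi^{-N^2}$ must exactly cancel the sign $(-1)^{N}$ that arises from orienting each product $\prod_m(u-a)$, and—this is where the hypothesis that $n$ is an \emph{integer} is indispensable—the factors $\xi^{\pm 2nN}$ must reduce to $1$ (for non-integer $n$ the right-hand side would no longer simplify). Everything else is routine trigonometric and root-of-unity manipulation; the only genuine ideas are the clean quadratic factorization obtained from the perfect-square discriminant and the identification of $\{\xi^{2m-1}\}_{m=1}^{N}$ with the roots of $x^N+1$.
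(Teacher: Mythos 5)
Your proof is correct, but it proceeds by a genuinely different route from the paper's. The paper argues non-constructively: it observes that the left-hand side $g(\alpha)$ is entire with double zeros exactly at $\alpha_l=\frac{\pi}{2(2M+1)}+\frac{\pi l}{2M+1}$, so that $R(\alpha)=g(\alpha)/\bigl(4\cos^2[\alpha(2M+1)]\bigr)$ is entire and zero-free, and then closes the argument with a Liouville-type step: $R$ is rational in $z=e^{i\alpha}$ and tends to $1$ as $z\to 0$ and $z\to\infty$, hence $R\equiv 1$. Your argument is instead a fully explicit algebraic computation: after writing each factor as $2\cos\frac{2n\pi}{N}-2\cos\bigl[\frac{(2m-1)\pi}{N}-2\alpha\bigr]$ with $N=2M+1$, you factor the associated quadratic in $u=\xi^{2m-1}$ via its perfect-square discriminant and identify $\{\xi^{2m-1}\}_{m=1}^{N}$ with the full root set of $x^N+1$, which collapses the product to $w^{-N}(w^N+1)^2=4\cos^2(\alpha N)$. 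I checked the bookkeeping: $\sum_{m=1}^N(2m-1)=N^2$ gives $\xi^{-N^2}=(-1)^N$, cancelling the sign from $\prod_m(-1)$, and the two signs $(-1)^N$ from the two linear families cancel each other, so the assembly is sound; the hypothesis that $n$ is an integer enters exactly where you say, via $\xi^{\pm 2nN}=1$. What each approach buys: the paper's argument is shorter and conceptually slick but leans on locating all zeros with correct multiplicity and on the behaviour at $z\to 0,\infty$, details it leaves to the reader; yours is longer in prose but entirely elementary and self-verifying, requiring no complex-analytic input beyond roots of unity.
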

\begin{proof}
Denote
\begin{equation}\label{gdef}
g(\alpha)=\prod_{m=1}^{2M+1}\left\{4 \sin^2\left[\frac{ (2m-1)\pi}{2(2M+1)}-\alpha
\right]- 4 \sin^2\left[\frac{ 2n\pi}{2(2M+1)}\right]  \right\}.
\end{equation}
The symmetry properties of the function $g(\alpha)$ 
\begin{eqnarray*}
g(-\alpha)=g(\alpha), \\
g\left(\alpha +\frac{\pi}{2M+1}\right)=g(\alpha)
\end{eqnarray*}
follow from (\ref{gdef}).

Function $g(\alpha)$ is analytical in the complex $\alpha$-plane and has the second order zeroes at the
points
\begin{equation}\label{zerg}
\alpha_l=\frac{\pi}{2(2M+1)}+\frac{\pi l}{2M+1}, \quad\quad l=0,\pm 1,\pm 2, \ldots
\end{equation}
It follows from (\ref{zerg}) that the
function
\begin{equation}\label{R}
R(\alpha)=\frac{g(\alpha)}{4\cos^2[\alpha(2M+1)]}
\end{equation}
is analytical and has no zeroes in the complex $\alpha$-plane.
Furthermore, this function is rational in the
variable $z=\rme^{\rmi \alpha}$ and approaches to 1 at $z\to \infty$ and at $z\to 0$.
Therefore, $R(\alpha)=1$.
\end{proof}

Putting $\alpha=0$ in (\ref{l2}) we find
\begin{equation}\label{l20}
\prod_{m=1}^{2M+1}\left\{4 \sin^2\left[\frac{ (2m-1)\pi}{2(2M+1)}
\right]- 4 \sin^2\left[\frac{ 2n\pi}{2(2M+1)}\right]  \right\}=4.
\end{equation}
Since
\begin{eqnarray*}
 \prod_{j=1}^{2M+1}\left\{4 \sin^2\left[\frac{ (2m-1)\pi}{2(2M+1)}
\right]- 4 \sin^2\left[\frac{ 2n\pi}{2(2M+1)}\right]  \right\}=\\
\fl\left[\prod_{m=1}^{M}\left\{4 \sin^2\left[\frac{ (2m-1)\pi}{2(2M+1)}
\right]- 4 \sin^2\left[\frac{ 2n\pi}{2(2M+1)}\right]  \right\}\right]^2\,4\left(1- \sin^2\left[\frac{ 2n\pi}{2(2M+1))}\right]\right),
\end{eqnarray*}
we get
\begin{equation*}
\fl\left[\prod_{m=1}^{M}\left\{4 \sin^2\left[\frac{ (2m-1)\pi}{2(2M+1)}
\right]- 4 \sin^2\left[\frac{ 2n\pi}{2(2M+1)}\right]  \right\}\right]^2=\left[\cos\frac{\pi n}{2M+1}\right]^{-2},
\end{equation*}
or
\begin{equation}\label{pro}
\fl\prod_{m=1}^{M}\left\{4 \sin^2\left[\frac{ (2m-1)\pi}{2(2M+1)}
\right]- 4 \sin^2\left[\frac{ 2n\pi}{2(2M+1)}\right]  \right\}=(-1)^n\left[\cos\frac{\pi n}{2M+1}\right]^{-1}.
\end{equation}
To fix the sign of the right-hand side of (\ref{pro}), we have taken into
account that just the first $n$ factors in the product in the left-hand side
are negative at $n=1,\ldots,M$.

Thus,
\begin{eqnarray}\label{2}
\fl\prod_{n=1}^M\prod_{m=1}^M\left[4 \sin^2\frac{ (2m-1)\pi}{2(2M+1)}-
 4 \sin^2\frac{2 n\pi}{2(2M+1)}\right]=
\prod_{n=1}^M\frac{(-1)^{n}}{\cos\frac{\pi n}{2M+1}}.
\end{eqnarray}
To determine the product in the right-hand side we use
formula 1.392.1  in  Ref. \cite{Gradshteyn}:
\begin{equation}\label{GR2}
\sin n x=2^{n-1}\prod_{k=0}^{n-1}\sin\left(x+\frac{\pi k}{n}\right).
\end{equation}
For $n=2M+1$, $x=\pi/2$, we get from (\ref{GR2})
\begin{equation*}
\prod_{k=0}^{2M}\cos\left(\frac{\pi k}{2M+1}\right)=2^{-2M}(-1)^M,
\end{equation*}
providing
\begin{equation}\label{1}
\prod_{k=1}^{M}\cos\left(\frac{\pi k}{2M+1}\right)=2^{-M}.
\end{equation}
Substitution of (\ref{1}) into (\ref{2}) leads finally to (\ref{pr2}).
\section*{References}

\end{document}